\newtheorem{theorem}{Theorem}
\newtheorem{definition}{Definition}
\title{Spatial Influence-aware Reinforcement Learning for Intelligent Transportation System}
\author{Wenhang Bao$^+$, 
       Xiao-Yang Liu$^*$\\
       $^*$Electrical Engineering, Columbia University,\\
       $^+$Department of Statistics, Columbia University,\\
       Emails: \{WB2304, XL2427\}@columbia.edu   
       }
\begin{document}

\maketitle

\begin{abstract}


Intelligent transportation systems (ITSs) are envisioned to be crucial for smart cities, which aims at improving traffic flow to improve the life quality of urban residents and reducing congestion to improve the efficiency of commuting. However, several challenges need to be resolved before such systems can be deployed, for example, conventional solutions for Markov decision process (MDP) and single-agent Reinforcement Learning (RL) algorithms suffer from poor scalability, and multi-agent systems suffer from poor communication and coordination. In this paper, we explore the potential of mutual information sharing, or in other words, spatial influence based communication, to optimize traffic light control policy. First, we mathematically analyze the transportation system. We conclude that the transportation system does not have stationary Nash Equilibrium, thereby reinforcement learning algorithms offer suitable solutions. Secondly, we describe how to build a multi-agent Deep Deterministic Policy Gradient (DDPG) system with spatial influence and social group utility incorporated. Then we utilize the grid topology road network to empirically demonstrate the scalability of the new system. We demonstrate three types of directed communications to show the effect of directions of social influence on the entire network utility and individual utility. Lastly, we define ``selfish index'' and analyze the effect of it on total group utility.
  
\end{abstract}

\section{Introduction}
\label{sect:introduction}

Emerging intelligent transportation systems (ITSs) \cite{vazifeh2018addressing,zhu2018online,zhu2016public,zhu2018joint} are expected to play an instrumental role in improving traffic flow, thus optimizing fuel efficiency, reducing delays and enhancing the general driving experience. ITSs are designed to resolve traffic congestion, which is an exceedingly complex and important issue faced by metropolitan areas around the world, as a result of global urbanization. The urbanization process makes traffic a serious problem in the urban area, as there are so many commuting objects and vehicles. Street interactions in dense urban areas can be critical bottlenecks in urban road networks, which affect commuters' efficiency. The competing nature of different intelligent routing applications is not aiming at improving the efficiency of the city as a whole. They optimize the routing paradigm for their customers locally, which might not be the global optimal solution.

ITSs are expected to resolve these issues, as advanced communication and computing technologies are developed to allow efficient information sharing \cite{tampuu2017multiagent,foerster2018multi} among commuting objects, making intelligent routing and traffic light control possible \cite{alam2016introduction,lv2015traffic}. Current traffic light systems are typically hard-coded based on investigation results or pure experience. The light control policy is optimized based on historical data and adapted according to daily patterns or drivers' feedback. The opportunity of using Artificial Intelligence (AI) for adaptation to real-time conditions, e.g. through detection wires in the pavement, tends to be fairly rudimentary. Evolving technologies offer the option of using information collected by cameras, including fine-grained knowledge of the positions and speeds of the vehicles. Such comprehensive real-time information can be leveraged to improve traffic flow through more agile traffic light control systems. While the potential benefits are immense, so are the technical challenges that arise in solving such real-time actuation problems on an unprecedented scale in terms of intrinsic complexity, geographic range, and number of objects involved.  

Under suitable assumptions, the problem of optimal dynamic traffic control may be formulated as a Markov decision process (MDP) \cite{onori2016dynamic,puterman2014markov,ross2014introduction}. The MDP framework provides a rigorous notion of optimally along the basis for computational techniques such as value iteration, policy iteration or dynamic programming. However, an MDP formulation may not always be satisfied in reality. The knowledge of relevant environmental parameters may not be available. Also, the environment is not stationary. Moreover, in terms of computational cost, an MDP approach suffers from the curse of dimensionality, resulting in excessively large state spaces in the realistic traffic system. RL algorithms overcome some of these limitations and have been previously considered in the context of optimal dynamic traffic light control. Researchers tried to use single-agent algorithms, for example, DDPG \cite{lillicrap2015continuous}, to resolve this issue, but these methods are still prone to prohibitively large state spaces and action spaces, implying poor scalability beyond a single-interaction scenario.

In this paper, we explore the potential of using multi-agent reinforcement learning algorithms, particularly Multi-agent Deep Deterministic Policy Gradients (MADDPG) \cite{lowe2017multi}, to optimize real-time traffic light control policies in large-scale systems \cite{chu2019multi}. First, we formulate the ITSs environment and the optimization goal. Then we analyze the properties of the Transportation system. We provide mathematical proof that stationary Nash Equilibrium can not be achieved in this case. Therefore, Reinforcement Learning algorithms are needed to resolve it. Thirdly, we consider a grid network topology with multiple rows and columns to examine the scalability properties of MADDPG algorithms and their performance in the presence of highly complex interactions created by the flow of vehicles along the main artery. Under this scenario, the influencing network would have complexity similar to a real city. We develop inward, outward and fully connected spacial influencing flows, to verify the effect of influencing directions in optimizing traffic signal control. We conclude that the directions of the spatial influence would not affect total network utility but affect individual utility. Finally, we define and investigate the ``selfish index'' of agents to check its influence on group utility. 

The remainder of the paper is organized as follows. We first present a detailed traffic environment description and problem statement. Then we provide a specification of our customized deep reinforcement learning algorithm MADDPG for grid topology road network with border interactions. The properties of the Intelligent Transportation System are also discussed. Also, we empirically evaluate the performance of the proposed spatial influence based MADDPG algorithms and illustrate the scalability of this algorithm. Social group utility optimization through reward function adjustment is also investigated. Finally, we conclude this paper and point out some future directions.

\section{Problem Description}
\label{sect:problem}

We model the road intersections and formulate our problem similar to  \cite{liu2018deep}. We consider a multi-agent reinforcement learning environment where each agent is responsible for one intersection. Our central aim here is to design a system and explore the performance and scalability of multi-agent reinforcement learning algorithms in optimizing real-time traffic light control policies, rather than design a productive level policy for a specific location. We adopt a discrete time formulation to simplify the description and allow the direct application of conventional MDP techniques for comparison. The methods can be easily converted to continuous time operation as well.

\subsection{Traffic Model}
We consider the simplest meaningful intersection setup with bidirectional traffic flows. Regardless of the number of intersections, all intersections follow the same settings. Vehicles are coming from $4$ directions as $4$ queues. We use $X_{ni}(t), i=1,2,3,4$, to denote the number of vehicles of traffic flow $i$ waiting to pass the intersection $n$ at time $t$ and $Y_n(t) \in \{0,1,2,3\}$ to indicate the configuration of the traffic lights for intersection $n$ at time $t$. The traffic light $L$ has four configurations:
\begin{itemize}
    \item``0'': green light for flow 1 and hence red light for flow 2;
    \item``1'': yellow light for flow 1 and hence red light for flow 2;
    \item``2'': green light for flow 2 and hence red light for flow 1;
    \item``3'': yellow light for flow 2 and hence red light for flow 1.
\end{itemize}

\begin{figure}
\begin{tabular}{cc}
  \includegraphics[width=60mm]{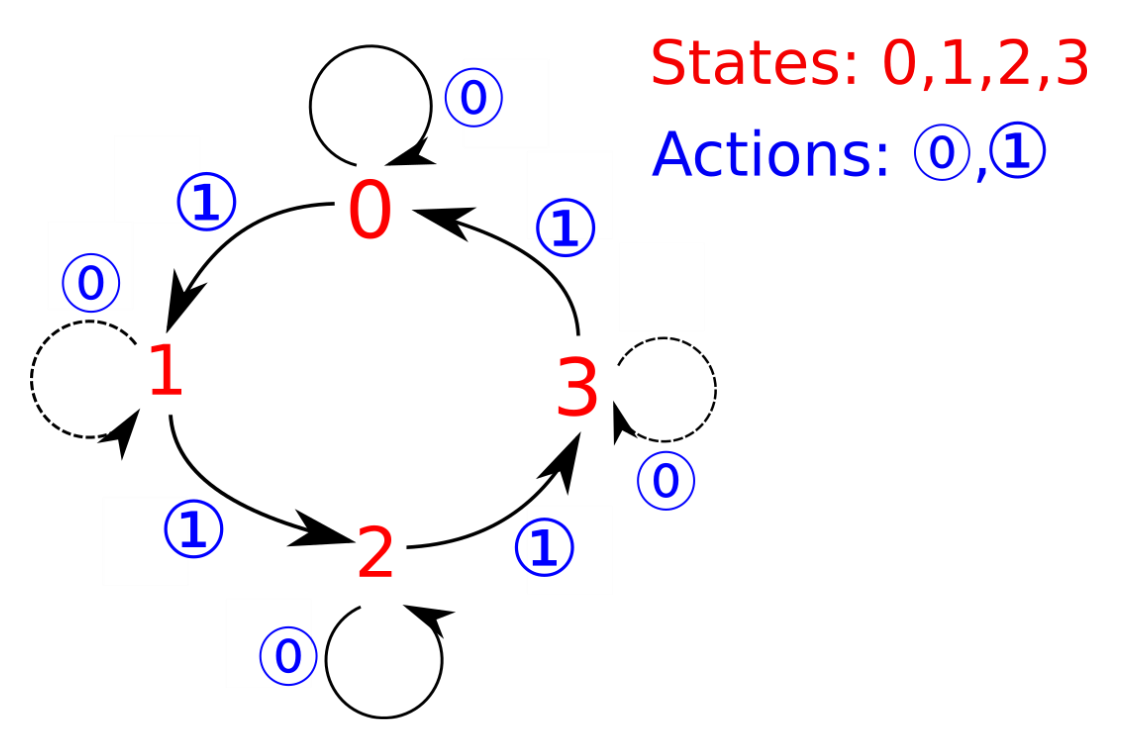} &   \includegraphics[width=60mm]{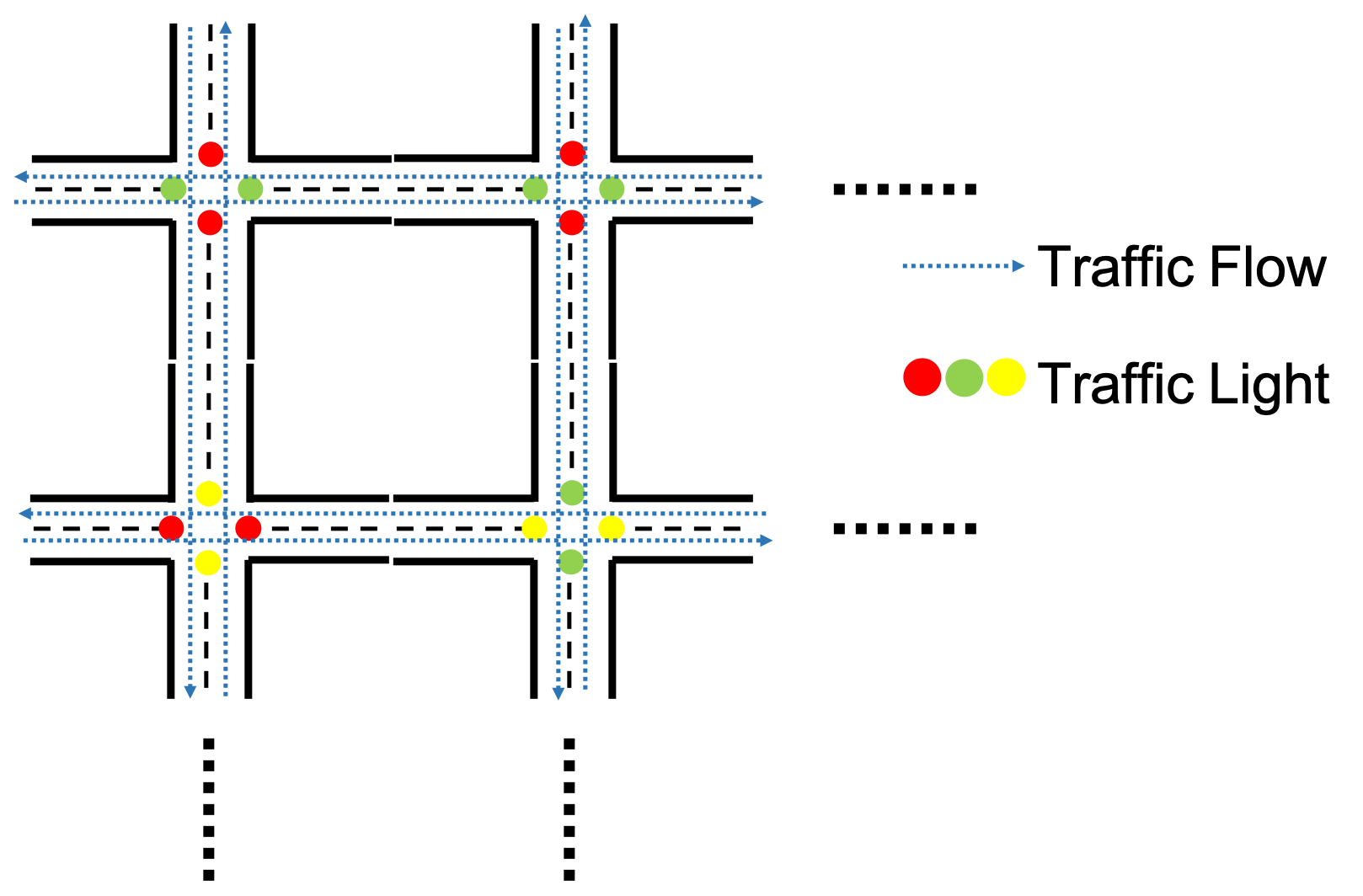} \\
(a) Traffic Light Configuration & (b) Traffic Flow \\[6pt]
\end{tabular}
\caption{The state transition diagram (left) and the traffic flows of a grid topology road network (right). }
\label{fig:DQN}
\end{figure}

Each configuration $Y_n(t)$ can either simply be continued in the next time slot or must otherwise be switched to the natural subsequent configuration $(Y_n(t)+1) \:\: \text{mod}\:\: 4$. This is determined by the action $A(t) \in \{0,1\}$ selected at the end of time slot $t$, which is represented by a binary variable as follows:``0'' for continue, and ``1'' for switch; then we have 
$$Y(t+1) = (Y(t)+A(t)) \: \text{mod} \: 4.$$

These rules give rise to a strictly cyclic control sequence. And the evolution of the queue state over time is governed by the recursion 

$$X_{ni}(t+1) = X_{ni}(t) + C_{ni}(t) - D_{ni}(t),$$
    
with $C_i(t)$ denoting the number of vehicles of traffic flow $i$ appearing at the intersection during time slot $t$ and $D_i(t)$ denoting the number of departing vehicles of traffic flow $i$ crossing the intersection during time slot $t$ and $D_1(t)=0$ if $Y(t) \ne 0$ and $D_2(t)=0 \: \text{if}\: Y(t)\ne 2$.

Traffic flows can either come from outside of the network or from neighbouring intersections belonging to the system. Vehicles coming from outside of the system follow a stochastic process and those from neighbouring intersections are controlled by the states and actions of those intersections.

\subsection{MDP Formulation}
The target Intelligent Transportation System is a Markov Decision Process (MDP). Under the MDP frame work, the state space $\mathcal{S}$, action space $\mathcal{A}$, reward $r$, policy $\pi$, and state transition probability $\mathcal{P}$ of our problem are defined as follows:

\begin{itemize}
    \item State $\mathcal{S} = (X_{ni}(t),Y_n(t))$ where $n = 1,2,\ldots,N$ and $N$ is the number of intersections, $i=1,2,3,4$ are queue length at west, north, east and south directions, respectively. The environment return the full state to the network. However, each agent would only receive local observation, which is $s_n(t)=(X_{ni}(t),Y_n(t)$ for agent $n$.
    \item Action $\mathcal{A} \in \{0,1\}$ where $0$ denotes continue and $1$ denotes switch, as described in traffic model.
    \item Reward $r(s_t,a_t)=- F(X_{ni}(t)),i=1,2,3,4$, where $F(X_{ni}(t))=\sum_{i=1}^4 X_{ni}^2(t)$. The negative sign means that the queue should be penalized. In this paper we use quadratic norm because we want to encourage short queue length. Long queue length would be penalized by the quadratic term of the reward function.
    \item Policy $\pi$ is the signal controlling command from the ITS, which maps a state to a probability distribution over the actions $\pi:\mathcal{S} \to \mathcal{P}(\mathcal{A})$, where $\mathcal{P}(\cdot)$ denotes probability distribution. 
    \item State transition probability $\mathcal{P}$ is the probability of the network entering the new state $s_{t+1}$, after taking the action $a_t$ at the current state $s_t$. At the current state, by taking action $a_t$, vehicles are transmitted to neighbouring intersections or to outside of the system, while stochastic traffic flow from outside of the system during time $t$ to $t+1$ would come into the system.
\end{itemize}

\section{Deep Reinforcement Learning Algorithms}
\label{sect:drl}

In this section, we first present an overview of the DDPG algorithm, then analyze the transportation system. We propose the way to apply spatial influence and reward adjustment \cite{chen2016exploiting} in multi-agent DDPG algorithm in solving signal control problem, and finally describe how they are implemented.

\subsection{Deep Deterministic Policy Gradient Algorithm}

Deep deterministic policy gradient (DDPG) \cite{lillicrap2015continuous} uses deep neural networks to approximate both action policies and value functions (state value or action-value). This method has two advantages. 1) This reduces the dimension of the state space and action space, since it only uses a limited number of parameters to approximate them. 2) Gradient descent can be used to get the optimum, which greatly speeds up the convergence and reduces the computational time. 

While traditional DDPG algorithms have a continuous action space, the traffic control signals in our model are discrete, i.e., $A(t) \in \{0,1\}$. We will apply the discretization process to transform the continuous outputs of the actor network in DDPG to discrete ones. The output layer of actor network has a modified sigmoid function as activation:
    $$
    y=\text{sigmoid}(tx),
    $$
where $x,y$ are inputs and outputs of the final layer, and $t$ is the ratio for steepening the sigmoid function, which is $1000$ in our experiment settings. Combined the modified sigmoid activation function with a node-wise binarization process, our discrete DDPG algorithm can reduce the errors caused by the continuous-to-discrete transformation to a great degree.

\subsection{Intelligent Transportation System Analysis}

\begin{theorem}
\label{theorem:social}
    Consider a road network that has $N$ intersections thus controlled by $N$ agents, the total group network utility can not be improved by single agent changing its action unilaterally. i.e.,
    $$ a_n^* = {\operatorname*{argmax}_{a_n \in \mathcal{A}_n}} \: S_n(a_n,a_{-n}),\forall n\in N, $$
    where $a_{-n}=(x_1,\ldots,x_{n-1},x_{n+1},\ldots,x_N)$ is the set of actions chosen by other agents except agent $n$, $S_n$ is the network group utility, or total traffic congestion cost in our case, and $a_n$ is the action taken by agent $n$.
\end{theorem}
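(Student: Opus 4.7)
The plan is to read the conclusion as asserting the existence of a pure-strategy joint action $a^* = (a_1^*, \ldots, a_N^*)$ at which no agent can improve the group by deviating unilaterally. Since the claim concerns the \emph{group} utility $S_n$ (the total congestion cost, which is the same scalar objective regardless of the agent index $n$), I would treat this as a common-payoff / team game and argue by exhibiting a global maximizer of $S_n$ over the joint action set.

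First I would record the two finiteness facts that drive everything. Each local action set is $\mathcal{A}_n = \{0,1\}$, so the product $\mathcal{A} = \prod_{n=1}^{N} \mathcal{A}_n$ has cardinality $2^N < \infty$. At a fixed time slot the per-slot utility $S_n = -\sum_{n'} F(X_{n'i}(t)) = -\sum_{n'} \sum_i X_{n'i}^2(t)$ is a bounded real-valued function on $\mathcal{A}$, because the queue lengths $X_{n'i}(t)$ are fixed (they are part of the current state, not the decision variable) and the action only governs which queues drain in slot $t$. Consequently $S_n$ attains its maximum on the finite set $\mathcal{A}$ at some $a^* \in \mathcal{A}$.

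Next I would close the argument in one line: by the definition of a global maximum, for every agent $n$ and every alternative $a_n \in \mathcal{A}_n$ we have $S_n(a_n, a_{-n}^*) \le S_n(a_n^*, a_{-n}^*)$, which is exactly the best-response condition $a_n^* \in \argmax_{a_n \in \mathcal{A}_n} S_n(a_n, a_{-n}^*)$, holding simultaneously for all $n$. Equivalently, the common-payoff game is a potential game whose potential coincides with $S_n$, and any global maximizer of the potential is automatically a pure-strategy Nash equilibrium, so the claimed fixed-point property is obtained.

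The main obstacle is interpretive rather than computational: one must commit to reading $S_n$ as the shared network objective (as the wording ``network group utility'' suggests) rather than an idiosyncratic per-agent reward. If instead each agent had its own $S_n$, pure-strategy equilibria need not exist and one would have to retreat to Nash's theorem over mixed strategies on $\Delta(\mathcal{A}_n)$. A related subtlety worth flagging in the proof is that this is a single-stage statement at a fixed queue state; the abstract's remark that the system admits no \emph{stationary} Nash equilibrium refers to the fact that the optimizing $a^*$ depends on $X_{ni}(t)$ and therefore shifts as the stochastic inflows evolve, not to non-existence at any one instant.
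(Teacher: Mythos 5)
Your proposal proves a different statement than the one the paper argues, and it misses the key idea the paper's own (informal) proof rests on. The paper's justification is dynamic, not static: it assumes the network is in a \emph{dynamically stationary} regime, i.e.\ the vehicle arrival and departure rates at each intersection balance over a full signal-configuration cycle, and then argues by contradiction that if a single agent unilaterally changed its strategy this balance would be broken, queue lengths in some directions would grow, and the total congestion cost would therefore worsen. Your argument instead fixes a single time slot, observes that the joint action set $\{0,1\}^N$ is finite and the shared objective bounded, and concludes that a global maximizer $a^*$ exists and is trivially a point admitting no improving unilateral deviation. That existence claim is mathematically fine, but it is nearly vacuous for a common-payoff game and it is not what the theorem is doing work for in the paper: under the stated model the reward at slot $t$ is $-\sum_{n}\sum_{i}X_{ni}^2(t)$, which is determined by the state before $a_t$ acts, so your one-shot maximization reading is degenerate unless silently reinterpreted as a post-drainage cost; and the theorem is later invoked for dynamic claims --- the reward-adjustment section uses it to argue that an individual agent acting alone cannot improve the group objective (hence neighbors' costs must be folded into each reward), and the proof of Theorem 2 says ``similar to the proof of Theorem 1'' and relies on queues growing over time under a never-switching signal.

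Concretely, the missing ingredient is any use of the queue recursion $X_{ni}(t+1)=X_{ni}(t)+C_{ni}(t)-D_{ni}(t)$ and of the cycle-level balance between arrivals and departures: the paper's argument lives entirely in those dynamics, while your proof never touches them. Your reconciliation of Theorem 1 with Theorem 2 (an instantaneous maximizer exists at every state, but no stationary equilibrium exists because the maximizer drifts with the stochastic inflows) is a sensible reading, but it is your own construction rather than the paper's, and it quietly changes the content of the theorem from ``no unilateral deviation from the stationary operating point can improve the total group utility'' to ``some joint action maximizes the instantaneous group utility,'' which is a substantially weaker claim and would not support the use the paper subsequently makes of this result.
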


It is worth noting that a traffic system always follows a cooperative game formulation. Also, it is always a total network maximization problem. Optimizing one intersection is of course of great value to the local community. However, to resolve the urban traffic issue in general, that is far from enough. The ITSs need to optimize the road network as a whole to reduce traffic congestion and improve commuting efficiency. 

If we assume that one agent can improve total group utility by changing its action unilaterally in a dynamically stationary road network, that means the vehicles' coming rate and leaving rate are equal for each intersection in a full state transition cycle. If one agent simply changes its strategy, the balance would be broken, queue length at certain directions would increase, thus increasing the total road network utility. 

\begin{theorem}
\label{theorem:nash}
    There is no stationary Nash Equilibrium for Transportation System. 
\end{theorem}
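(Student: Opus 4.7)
The plan is to prove Theorem~\ref{theorem:nash} by contradiction, leveraging the dynamic balance condition that underlies the argument for Theorem~\ref{theorem:social}. Suppose, for contradiction, that a stationary Nash Equilibrium exists: namely, a profile of stationary (time-invariant, state-dependent) policies $(\pi_1^*, \ldots, \pi_N^*)$ such that each $\pi_n^*$ is a best response to $\pi_{-n}^*$ with respect to agent $n$'s own expected discounted cost $J_n = \mathbb{E}[\sum_{t \ge 0} \gamma^t \sum_{i=1}^{4} X_{ni}^2(t)]$. I would first observe that any stationary regime must satisfy a per-direction flow balance $\mathbb{E}[C_{ni}(t)] = \mathbb{E}[D_{ni}(t)]$ at every intersection in steady state; otherwise some queue diverges in expectation and, by the quadratic reward, $J_n$ becomes infinite, contradicting the optimality of $\pi_n^*$.

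Next, I would contrast the individual best response with the cooperative configuration characterized by Theorem~\ref{theorem:social}. Because agent $n$'s local cost depends only on $X_{ni}(t)$ and is oblivious to downstream congestion, the individually optimal policy is biased toward discharging its own queues as quickly as possible, using green phases in the direction that pushes the most traffic onto neighbouring intersections. By Theorem~\ref{theorem:social}, this greedy discharge pattern generically differs from the balanced socially optimal configuration: any unilateral deviation from that balanced profile strictly increases the group cost, which here manifests as excess outflow from agent $n$ into the incoming queues of its neighbours.

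Finally, I would close the contradiction by showing that the resulting imbalance is incompatible with stationarity. Since the departure rate $D_{mi}(t)$ at a downstream intersection $m$ is capped by the saturation throughput of a single green phase, any sustained excess of inflow from upstream best-responders forces at least one queue at $m$ to grow without bound; its quadratic penalty then makes $J_m$ diverge, so $\pi_m^*$ cannot have been a best response in the first place. Therefore no fixed profile $(\pi_1^*, \ldots, \pi_N^*)$ can simultaneously satisfy individual best response and stationarity, so no stationary Nash Equilibrium exists. The main obstacle will be formalizing the ``upstream best response saturates downstream capacity'' claim in full generality; the cleanest route I foresee is to first establish it on a two-intersection upstream/downstream sub-network, where the argument reduces to comparing a scalar arrival rate against a scalar saturation rate, and then invoke Theorem~\ref{theorem:social} together with the grid connectivity to propagate the contradiction to the entire network.
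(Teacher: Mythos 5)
There is a genuine gap: the core step of your argument --- that unilateral best responses by upstream agents generate a ``sustained excess of inflow'' that forces some downstream queue to grow without bound --- is asserted rather than proved, and it does not hold in general. In any stationary regime, the flow an upstream intersection pushes downstream is limited by that intersection's own (external plus upstream) arrivals, so ``greedy discharge'' by best-responders does not by itself saturate downstream capacity; whether a downstream queue diverges depends on a quantitative comparison between arrival rates and the service capacity available under the signal cycle, and your plan never makes that comparison (you explicitly defer it as ``the main obstacle''). This quantitative step is exactly where the paper's proof does its work: it first reduces the candidate equilibria to the constant all-switch profile $\mathbf{a}=(1,1,\ldots,1)$ (any agent playing ``continue'' forever makes its two queues strictly increase, so such a profile cannot be an equilibrium), and then observes that under all-switch each flow is served once per $4$-slot cycle, so stationarity requires $4c \le d$, i.e.\ $c \le \tfrac{1}{4}d$; choosing $c=\tfrac{1}{2}d$ violates this, yielding the contradiction. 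Your proposal has neither the reduction to a concrete candidate profile nor the arrival-versus-capacity inequality, so the contradiction is never actually derived; note also that the paper's conclusion is established only under strong assumptions (constant arrival rate $c$ and constant saturation rate $d$), which your more general policy-based formulation would have to import or replace.

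Two secondary problems. First, with a discount factor $\gamma\in(0,1)$, a linearly growing queue gives a \emph{finite} discounted quadratic cost $\mathbb{E}\bigl[\sum_{t\ge 0}\gamma^{t}X^{2}(t)\bigr]$, so your repeated ``the cost diverges, hence not a best response'' inferences need to be replaced by an explicit comparison with a deviating policy (e.g.\ switching reduces cost), as the paper implicitly does. Second, Theorem~\ref{theorem:social} states that total group utility cannot be improved by a unilateral deviation; it does not say that an individual best response ``generically differs'' from the balanced socially optimal configuration, so the middle portion of your argument does not follow from it.
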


\begin{proof}
Here we prove by contradicting the following two steps.
\begin{itemize}
    \item If there is Nash Equilibrium, it must be $\mathbf{a}=(1,1,\ldots,1)$, if traffic flow from outside of the system is not $0$;
    \item $\mathbf{a}=(1,1,\ldots,1)$ is not a Nash Equilibrium. 
\end{itemize}
To prove Step 1, we only need to assume that there is a $0$ in $\mathbf{a}$. Similar to the proof of Theorem \ref{theorem:social}, if there is a $0$ in $\mathbf{a}$ as the solution of Nash Equilibrium, that means for such intersections, the traffic signal never changes. However, if the traffic signal never changes, the length of $2$ queues at that intersection would be strictly increasing, thus the expected reward would increase for that intersection as well as the traffic network. Therefore, if there is Nash Equilibrium, it must be $\mathbf{a}=(1,1,\ldots,1)$.

To prove Step 2, we first introduce the conclusion of Step 1. Consider a linear road topology with $2$ intersections. So $(X_{1i}(t),Y_1(t))$ and $(X_{2i}(t),Y_2(t)),i = 1,2,3,4$ would be the queue length and light states for agent $1$ and $2$, respectively. And the evolution of the queue state over time is governed by the recursion 
$$
X_{ni}(t+1) = X_{ni}(t) + C_{ni}(t)-D_{ni}(t),
$$
where $n=1,2$, with $C_i(t)$ denoting the number of vehicles of traffic flow $i$ appearing at the intersection during time slot $t$ and $D_i(t)$ denoting the number of departing vehicles of traffic flow $i$ crossing the intersection during time slot $t$, as described in traffic model. Here we make a simple assumption that the number of vehicles appearing at and departing from one intersection at a constant rate $c$ and $d$.
Note that as $C_i(t)$ from outside of the system is a constant, so we have a stationary environment. Also note that $D_i(t)=d$ only when there is green light. Also, as $a_1=a_2=1$ for all $t$, we conclude that $4$ actions complete one cycle. 
If there is Nash Equilibrium, following equation must hold:
$$X_{ni}(t+4) \le X_{ni}(t).$$
Therefore,
\begin{align*}
    X_{ni}(t+4) &\le X_{ni}(t+3)+ C_{ni}(t+3)-D_{ni}(t+3)\\
    &\ldots \\
    &=X_{ni}(t)+4c-d \Rightarrow c \le \frac{1}{4} d.
\end{align*}
If $c=\frac{1}{2}d > \frac{1}{4}d$, the above inequality doesn't hold. So $\mathbf{a}=(1,1,\ldots,1)$ is not Nash Equilibrium for transportation network. 
\end{proof}

However, we can easily infer that if $c=\frac{1}{2}d$, we only need to have $5$ actions in a cycle, to make sure that there are $2$ green light slot in a cycle, thus $X_{ni}(t+5) \le X_{ni}(t)$. 

Note that we used two assumptions in the proof of no Nash Equilibrium: 1. Constant appearing rate of vehicles from outside of the transportation system; 2. Constant departing rate of vehicles at the intersection. These two assumptions are too naive and too strong to be satisfied in real traffic flow. In real life, it is more reasonable to assume that the appearing rate of vehicles is stochastic and the departing rate of vehicles obeys certain probability distribution.

It is counter-intuitive but reasonable to find out that there is no stationary Nash Equilibrium solution for the transportation system. However, it supports the fact that reinforcement Learning algorithm is necessary to learn optimal signal control policy. Otherwise, the hard-coded policy would be enough.

\subsection{Spatial Influence Based Solution}

We propose a unified method for achieving both coordination and communication in Multi-agent reinforcement learning (MARL). For road networks with multiple intersections, we apply multi-agent DDPG in our system. Opposed to the single-agent DDPG method, multi-agent methods do not have computational issues when there are huge observation space and action space. As one agent is only responsible for optimizing traffic signal control for one intersection, it significantly improves the scalability of the system. Also, MADDPG methods would make no assumption about the shape of the network, make modeling mixed direction and shape road networks possible. 

The MADDPG method needs a way to communicate with other agents to compliment the lack of global view for more advanced control like ``greenwave''. Prior work often resorts to centralized training to ensure that agents learn to coordinate. However, centralized training is not feasible in this case, as intersections are not identical. Some intersections are at the border of the network and others are in the middle of the network. Border intersections observe stochastic traffic flows, which means that they need to learn to adapt. Central intersections observe traffic flows coming from other intersections within the network, so they need to cooperate. Inspired by \cite{jaques2019social}, we propose to use social influence as the way to accomplish it. In our work, we make no assumption that agents could view one anothers rewards, as that is often not practical in real life. We relying only on agents viewing each other's actions. We model neighboring agents' actions into the target agent's observation space. The observation space for agent $n$ now changes to $(X_{ni},Y_n,\mathbf{a})$, where $\mathbf{a}$ is the actions taken by neighbouring agents at the last time step. The intuition behind this is that, by observing neighboring agents' actions and the observation space of itself, the agent would be able to learn the policy of traffic signal control and balance between optimizing the reward of itself, its neighbourhood and indirectly the whole system.

\subsection{Rewards Adjustment}
As we have shown in the previous option, total group network utility can not be improved by a single agent changing its action unilaterally. Therefore, we propose to adjust the reward function of the agent to make sure that each agent would not only consider its own reward, but also the rewards of its neighbouring agents. We change the reward function from:
$$ r_n(s_t,a_t)=-F(X_n(t))$$
to
\begin{equation}
\label{eqt:reward}
    r_n(s_t,a_t)=-F(X_n(t))-\sum_{m \in \mathcal{M}_n}w_{nm}F(X_m(t)) ,
\end{equation} 
where $\mathcal{M}_n$ is the neighbouring agents of agent $n$, $w_{nm}$ is the social tie weight. We define $\sum_{m \in \mathcal{M}}w_{nm}$ as ``Selfish Index'' as it determines the importance of neighbouring agents to the target agent. Based on this definition, the agent would not only take its own reward into account, but its neighbourhood as well. By directly linking agent with its neighbourhood and thus indirectly with agents far away, the agent is expected to learn to cooperate and communicate with each other to optimize the social group total utility. 

\subsection{DDPG for Grid Road Networks}
For road networks with multiple intersections, we apply the multi-agent DDPG algorithm in the network. We have one agent for each intersection. The number of cars and lights states at all intersections are inputs to the MADDPG system. To ensure that the MADDPG can possess a non-local view regarding the total network utility, spatial influence and social group utility methods are considered, as we have specified in the previous sections. The algorithm is defined in Algorithm \ref{alg:DDPG}: 

\begin{algorithm}
    \caption{Multi-agent DDPG for ITSs}
    \KwInput{number of episodes $M$, time frame $T$, minibatch size $N$, learning rate $\lambda$, and number of agents or intersections $J$}
    \begin{algorithmic}[1]
    \label{alg:DDPG}
        \FOR{$j = 1, J$}
            \STATE Randomly initialize critic network $Q_j(O_j,a|\theta_j^Q)$ and actor network $\mu_j(O_j|\theta_j^\mu)$ with random weight $\theta_j^Q$ and $\theta_j^\mu$ for agent $j$;
            \STATE Initialize target network $Q'_j$ and $\mu'_j$ with weights $\theta_j^{Q'} \leftarrow \theta_j^{Q}$, $\theta_j^{\mu'} \leftarrow \theta_j^{\mu}$ for each agent $j$;
            \STATE Initialize replay buffer $B_j$ for each agent $j$;
        \ENDFOR
        \FOR {episode $= 1, M$}
            \STATE Initialize a random process $\mathcal{N}$ for action exploration;
            \STATE Receive initial observation state $s_0$;
            \FOR{$t = 1, T$}
                \FOR{$j = 1, J$}
                    \STATE {Select action $a_{j,t} = \mu_j(O_{j,t}|\theta_j^\mu) + \mathcal{N}_t$ according to the current policy and exploration noise;}
                \ENDFOR
                \STATE {Each agent executes action $a_{j,t}$, market state changes to $s_{t+1}$;}
                \STATE {Each agent observes reward $r_{j,t}$ and observation $O_{j,t+1}$, where observation $O_{j,t+1}$ is adjusted according to spatial influence};
                \STATE {Each agent adjust reward according to Equation \ref{eqt:reward}}
                \FOR{$j = 1, J$}
                    \STATE {Store transition ($O_{j,t}$, $a_{j,t}$, $r_{j,t}$, $O_{j,t+1})$ in $B_j$;}
                    \STATE {Sample a random minibatch of $N$ transitions ($O_{j,i}$ , $a_{j,i}$ , $r_{j,i}$ , $O_{j,i+1}$) from $B_j$;}
                    \STATE {Set $y_{j,i} = r_{j,i}+\gamma Q'_j (s_{t+1}, \mu'_j (O_{j,i+1}|\theta_j^{\mu'}|\theta_j^{Q'}))$
                    for $i = 1, \ldots, N$;}
                    \STATE {Update the critic by minimizing the loss: $L = \frac{1}{N}\sum_i(y_{j,i} -Q_j(O_{j,i},a_{j,i}|\theta_j^Q))^2$;}
                    \STATE {Update the actor policy by using the sampled policy gradient:
                        \begin{multline*}
                            \nabla_{\theta^\mu} \pi \approx \frac{1}{N}\sum_i \nabla_a Q_j(O,a|\theta_j^Q)|_{O = O_{j,i},a = \mu_j(O_{j,i})}
                            \times \nabla_{\theta^\mu} \mu_j(O_j|\theta^\mu)|_{s_i};
                        \end{multline*}
                    }
                    \STATE {Update the target networks:\:\:$\theta_j^{Q'}\leftarrow \tau \theta_j^Q + (1-\tau)\theta_j^{Q'},\:\:\:\:\theta_j^{\mu'} \leftarrow \tau \theta_j^\mu + (1-\tau)\theta_j^{\mu'}.$}
                \ENDFOR
            \ENDFOR
        \ENDFOR
    \end{algorithmic}
\end{algorithm}

\section{Performance Evaluation}
\label{sect:performance}

In this section, we discuss the numerical experiments implemented to evaluate the performance of the social influence based MADDPG algorithms as detailed above. In the experiments, we apply a 4-layer fully-connected neural network for both actor and critic in both the linear topology and grid scenario, with 400 neurons for the first 2 layers, following with two layers with 600 and 200 neurons, respectively. All agents share the same architecture, while they have difference observation space, as spatial influence is applied to them. Noting that the actor needs to output near-binary values as action values as mentioned above, the actor neural network has one modified sigmoid activation function for the output layer. As usual, a copy of the actor and critic neural network are taken as the target networks with ``soft'' updates. The outputs of the actor network are clipped to binary values 0 and 1 indicating light state changing and remaining the same. We take an episode length of 150 steps of simulation for collecting the learning samples, and train both actor and critic networks with a batch size of 64 and discount factor $\gamma =0.99$. The OrnsteinUhlenbeck noise \cite{horsthemke1980perturbation} is applied for the explorations, with a variance of 0.3.

As for the traffic environment settings in the experiments, the vehicle coming and passing rates are set differently for main roads and branch roads. the vehicle coming rate indicates the number of vehicles coming from outside of the road networks into the networks on each road per time step, which is set to be a random value with upper bound $C_m$ for main roads or $C_b$ for branch roads; and the vehicle passing rates indicate the number of vehicles passing one intersection within one time step, set to be the fixed numbers of 16 and 4 for main road and branch road, respectively.

\subsection{Grid Road Topology}
We assume the numbers of arriving vehicles in eastern and western direction to the arterial road in each time step to be independent and Bernoulli distributed with parameter $p_1$. The numbers of arriving vehicles in southern and northern directions in each time step are also independent and Bernoulli distributed with parameter $p_2$.
We extend the road network to be a grid topology. The intersection setting is described in Section \ref{sect:problem} and the road network has more than one intersections in south-north and east-west directions. The quadratic congestion cost function is of the form $F(X)=\sum_{n=1}^N \sum_{i=1}^4 X_{ni}^2$. In our experiments, we design a road network with three columns and three rows, therefore $9$ intersections in total. Fig. \ref{fig:directions} shows that the algorithm with spatial influence and reward adjustment integrated performs the best, no matter the directions of the influencing flow. 
\begin{figure}
\begin{tabular}{ccc}
\centering
  \includegraphics[width=45mm]{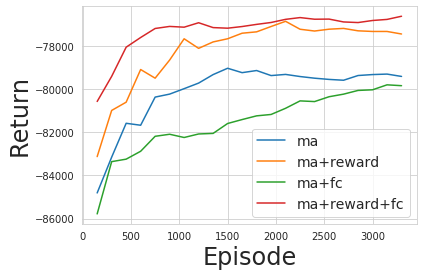} &   \includegraphics[width=45mm]{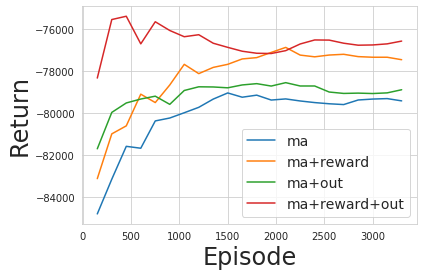}&
  \includegraphics[width=45mm]{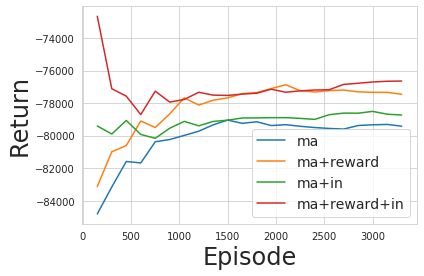}\\
(a) Fully Connected & (b) Outward & (c) Inward \\[6pt]
\end{tabular}
\caption{Different influencing flow. We can see for all of the influencing direction flow, by adding spatial influence and reward adjustment functions, the traffic system would have higher return.}
\label{fig:directions}
\end{figure}

\subsubsection{Spatial Influence}
As we have described in Section \ref{sect:drl}, we design spatial influence flow to affect the behavior of neighbouring agents. In inward and outward scenarios, for any two agents that are physically connected, there will be only one influencer and one influencee. For a fully connected influence scenario, all agents can observe the last actions from all its neighbours. 
\begin{itemize}
    \item Inward Influence: agents at the four corners are independent.  
    \item Outward Influence: agent at the center is independent.
    \item Fully Connected Influence: all agents are not independent.
\end{itemize}

As we can see from Fig. \ref{fig:directions_selfish}, the total utility converges to the same level. So all the methods can optimize total network utility. However, we can see from the detailed analysis of individual utility that in inward influence flow, agents at the corners have less congestion compared to other scenarios, while in outward flow scenario, the agent at the center has relatively less congestion. Therefore, in a city where the importance of interactions varies, the direction of influencing flow should be considered. Thus the city traffic flow can be optimized both globally and locally. 

\begin{figure}
\begin{tabular}{cc}
\centering
  \includegraphics[width=50mm]{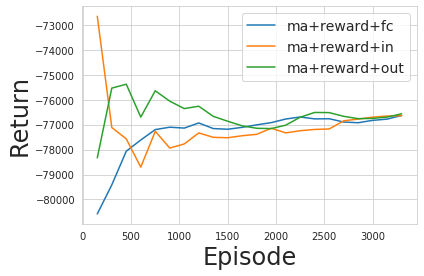} &   \includegraphics[width=50mm]{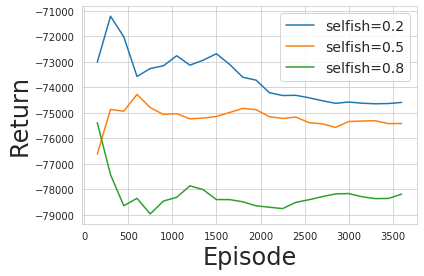} \\
(a) Directions & (b) Selfish index \\[6pt]
\end{tabular}
\caption{The influencing directions do not affect the performance level at the convergence (left). The ``Selfish Index'' affects the system performance (right). If agents are cooperative, the total group utility would be better. }
\label{fig:directions_selfish}
\end{figure}

\subsubsection{Social Group Utility}
We adjust the reward function of each agent according to Equation (\ref{eqt:reward}). As we can see from Fig. \ref{fig:directions_selfish}, the total network utility increases as the social tie weights of neighbouring agents increases. This confirms our anticipation that as neighbouring agents' rewards become more important to the target agent, it will learn to cooperate with its neighbours to increase the road network utility.

\section{Conclusion}
\label{sect:conclusion}

We have explored the scope of using social influence based Multi-agent Deep Deterministic Policy Gradient (MADDPG) to optimize real-time traffic signal control policies in emerging large-scale Intelligent Transportation Systems. We compared the performance between social influence based MADDPG and naive MADDPG, and demonstrated that social influence based algorithm has the maximum potential in scaling without the issue of large observation space and action space and would help communication between agents. We verified it the scalability properties of DDPG algorithms in both a linear topology and a grid topology, and demonstrated the emergence of intelligent behavior such as ``green wave'' patterns, confirming that the agents learned to cooperate through social influence communication.


\iftrue

\fi
\end{document}